\setlist[enumerate,1]{leftmargin=20pt}
\setlist[itemize,1]{leftmargin=15pt}
\newenvironment{eatab}
 {\medskip\noindent\begin{minipage}{\textwidth}\normalfont\ttfamily
  \begin{tabbing}mmm\=mmm\=mmm\=mmm\=mmm\=mmm\=mmm\=mmm\=mmm%
               \=mmm\=mmm\=mmm\=mmm\kill}
 {\end{tabbing}\end{minipage}\medskip}
\newtheorem{theorem}{Theorem}[section]
\newtheorem{corollary}[theorem]{Corollary}
\newtheorem{lemma}[theorem]{Lemma}
\theoremstyle{definition}
\newtheorem{definition}[theorem]{Definition}
\newtheorem{remark}[theorem]{Remark}
\newtheorem{terminology}[theorem]{Terminology}
\DeclareMathOperator*{\PAR}{\mathbf{PAR}}
\newcommand\Arity{\ensuremath{\mathrm{Arity}}}
\newcommand\Bool{\ensuremath{\mathtt{Bool}}}
\newcommand\bx{\ensuremath{\bar x}}
\newcommand\Merge{\ensuremath{\mathrm{Merge}}}
\newcommand\Dec{\ensuremath{\mathtt{decrement}}}
\newcommand\Else{\ensuremath{\mathtt{else\ }}}
\newcommand\eps{\ensuremath{\varepsilon}}
\newcommand\false{\ensuremath{\bot}}
\newcommand\F{\ensuremath{\mathtt{Fire}}}
\newcommand\Fn{\ensuremath{\mathtt{Fire}^n}}
\newcommand\fn{\ensuremath{f^n}}
\newcommand\fnz{\ensuremath{f^n_0}}
\newcommand\fno{\ensuremath{f^n_1}}
\newcommand\If{\ensuremath{\mathtt{if\ }}}
\newcommand\Inc{\ensuremath{\mathtt{increment}}}
\newcommand\Instr{\ensuremath{\mathrm{Instr}}}
\newcommand\Instrn{\ensuremath{\mathrm{Instr}(n)}}
\newcommand\Inter{\ensuremath{\mathrm{Inter}}}
\newcommand\Intra{\ensuremath{\mathrm{Intra}}}
\renewcommand\k{\ensuremath{\kappa}}
\newcommand\lat{\ensuremath{\leftarrowtail}}
\newcommand\N{\ensuremath{\mathbb N}}
\newcommand\nil{\ensuremath{\mathtt{nil}}}
\newcommand\Num{\ensuremath{\mathtt{Num}}}
\newcommand\prl{\hspace{.5em}\parallel\hspace{.5em}}
\newcommand\PA{\ensuremath{\mathrm{Prog}(A)}}
\newcommand\PB{\ensuremath{\mathrm{Prog}(B)}}
\newcommand\PC{\ensuremath{\mathrm{Prog}(C)}}
\newcommand\R{\ensuremath{\mathbb R}}
\newcommand\Random{\ensuremath{\mathrm{Random}}}
\newcommand\Real{\ensuremath{\mathtt{Real}}}
\newcommand\qef{\hfill$\triangleleft$} %Quod erat faciendum
\newcommand\s{\ensuremath{\sigma}}
\newcommand\sn{\ensuremath{\sigma_n}}
\newcommand\set[1]{\ensuremath{\{#1\}}}
\newcommand\Then{\ensuremath{\mathtt{then\ }}}
\newcommand\fnr{\ensuremath{f^n_{r_n}}}
\newcommand\tnz{\ensuremath{t^n_0}}
\newcommand\tno{\ensuremath{t^n_1}}
\newcommand\tnr{\ensuremath{t^n_{r_n}}}
\newcommand\true{\ensuremath{\top}}
\newcommand\U{\ensuremath{\Upsilon}}
\newcommand\Undon{\ensuremath{\mathrm{Undo}(n)}}
\newcommand\V{\ensuremath{\mathcal V}}
\newcommand\Voc{\ensuremath{\mathrm{Voc}}}
\newcommand\VA{\ensuremath{\mathrm{Voc}(A)}}
\newcommand\VB{\ensuremath{\mathrm{Voc}(B)}}
\title[Reversify]{Reversify any sequential algorithm}
\author[Yuri Gurevich]{Yuri Gurevich\\
{\scriptsize University of Michigan}}
\address{Computer Science and Engineering\\
University of Michigan\\
Ann Arbor, MI  48109, U.S.A}
\email{gurevich@umich.edu}
\thanks{Partially supported by the US Army Research Office under W911NF-20-1-0297}
\begin{document}

\begin{abstract}
To reversify an arbitrary sequential algorithm $A$, we gently instrument $A$ with bookkeeping machinery.
The result is a step-for-step reversible algorithm that mimics $A$ step-for-step and stops exactly when $A$ does.

Without loss of generality, we presume that algorithm $A$ is presented as an abstract state machine that is behaviorally identical to $A$.
The existence of such representation has been proven theoretically, and
the practicality of such representation has been amply demonstrated.
\end{abstract}
\maketitle
\thispagestyle{empty}

\begin{quote}\raggedleft\small\it
Darn the wheel of the world! Why must it\\ continually turn over? Where is the reverse gear?\\
--- Jack London
\end{quote}

\section{Introduction}
\label{sec:intro}

In 1973, Charles Bennett posited that an``irreversible computer can always be made reversible'' \cite[p.~525]{Bennett}.
To this end, he showed how to transform any one-tape Turing machine $M$ that computes a function $F(x)$, into a reversible three-tape Turing machine $M^R$ computing the function $x \mapsto (x,F(x))$.
First, $M^R$ emulates the computation of $M$ on $x$, saving enough information to ensure step-for-step reversibility. If and when the output is computed, the emulation phase ends,
and $M^R$ proceeds to erase all saved information with the exception of the input.

Bennett's construction shows that, in principle, every sequential algorithm, is reversifiable%
\footnote{We attempt to give a new useful meaning to the word reversify. To reversify an algorithm means to transform it into a reversible form (rather than to formulate it anew in verse, which is the current dictionary meaning of reversify).}.
In practice, you don't want to compile your algorithms to a one-tape Turing machine $M$ and then execute the three-tape Turing machine $M^R$.

It had been discussed in the programming community, in particular by Edsger Dijkstra  \cite[pp~351--354]{Dijkstra} and David Gries \cite[pp~265--274]{Gries}, which programs are reversible, but Bennett's reversification idea was either unknown to or neglected by programming experts.

The progress was led by physicists. They reversified Boolean circuits and other computation models. ``We have shown'', wrote Edward Fredkin and Tommaso Toffoli \cite[p~252]{FT}, ``that abstract systems having universal computing capabilities can be constructed from simple primitives which are invertible''.
The interest in reversible computations and especially in reversible circuit computations soared with the advent of quantum computing. This is related to the fact that pure (involving no measurements) quantum computations are reversible.
There are books on reversible computations \cite{Al-Rabadi, De Vos, Morita, Perumala}.
The International Conference on Reversible Computation will have its 13th meeting in 2021 \cite{RC}

In this paper, we use sequential abstract state machines, in short sequential ASMs, to address the problem of practical reversification of arbitrary sequential algorithms. Why ASMs? Let us explain.

ASMs were introduced to faithfully simulate arbitrary algorithms on their natural abstraction levels \cite{G092}. One instructive early result was the formalization of the C programming language \cite{G098}.

In \cite{G103}, an ambitious ASM thesis was formulated: For every algorithm $A$, there is an ASM $B$ that is behaviorally equivalent to $A$.
If $A$ is sequential, then $B$ has the same initial states as $A$ and the same state transition function.
In \cite{G141}, we axiomatized sequential algorithms and proved the ASM thesis for them%
\footnote{Later these results were generalized to other species of algorithms, e.g.\ to synchronous parallel algorithms \cite{G157} and interactive algorithms \cite{G182}.}.
Thus, semantically, sequential algorithms are sequential ASMs.
In the meantime, substantial evidence has been accumulated to support the practicality of faithful ASM modeling. Some of it is found in the 2003 book \cite{Boerger}.

The main result of the present paper is a simple construction, for every sequential ASM $A$, of a reversible sequential ASM $B$ that step-for-step simulates $A$ and stops when $A$ does.
$B$ does exactly what $A$ does plus some bookkeeping.
If $A$ uses some input and output variables and computes some function, then $B$ uses the same input and output variables and computes the same function.

\subsection*{Acknowledgment}

Many thanks to Andreas Blass for generous sanity check.

\section{Preliminaries}
\label{sec:prelim}

The purpose of this section is to make the current paper self-contained.

\subsection{Sequential algorithms}
\label{sub:seq}

By sequential algorithms we mean algorithms as the term was understood before modern computer science generalized the notion of algorithm in various directions,
which happened in the final decades of the 20th century.
In this connection, sequential algorithms are also called classical.

While the term ``sequential algorithm" is short and convenient, it also is too laconic. Some explication is in order.
``Algorithms,'' said Andrei Kolmogorov in a 1953 talk \cite{Kolmogorov}, ``compute in steps of bounded complexity.'' Let's look more closely at the two aspects mentioned by Kolmogorov.
One aspect is computing in steps, one step after another. Kolmogorov didn't say ``one step after another.'' He didn't have to. That was understood at the time.

The other aspect is a somewhat vague constraint: the bounded complexity of any one step of the algorithm. We prefer a related constraint, arguably a version of Kolmogorov's constraint: the bounded resources of any one step of the algorithm. The bounded resources constraint, still informal, seems to us clearer and more suitable. It might have been Kolmogorov's intention all along. We do not know exactly what Kolmogorov said during that talk%
\footnote{Vladimir Uspensky, who chaired the Logic Department of Moscow State University after Kolmogorov's death, admitted to me that the abstract \cite{Kolmogorov} of Kolmogorov's talk for the Moscow Mathematical Society was written by him (Uspensky) after many unsuccessful attempts to squeeze an abstract from Kolmogorov.}.

To summarize, sequential algorithms can be characterized informally as transition systems that compute in bounded-resources steps, one step after another.

In our axiomatization of sequential algorithms \cite{G141}, the bounded resources constraint gives rise to the crucial bounded-exploration axiom. It is also used to justify that a sequential algorithm doesn't hang forever within a step; time is a bounded resource.

\smallskip
In the following subsections, we recall some basic notions of mathematical logic in the form appropriate to our purposes.

\subsection{Vocabularies}\mbox{}
\label{sub:syn}

A \emph{vocabulary} is a finite collection of function symbols where each symbol $f$ is endowed with some metadata according to the following clauses (V1)--(V4). We interleave the four clauses with auxiliary definitions and explanations.

\smallskip
\begin{enumerate}[leftmargin=30pt]
\item[(V1)] Each symbol $f$ is assigned a natural number, the \emph{arity} of $f$. \qef
\end{enumerate}

\smallskip
Define \emph{terms} (or \emph{expressions}) by induction.
If $f$ is an $r$-ary symbol in and $t_1,\dots,t_r$ are terms, then $f(t_1,\dots,t_r)$ is a term.
(The case $r=0$ is the basis of induction.)

\smallskip
\begin{enumerate}[leftmargin=30pt]
\item[(V2)] Some symbols $f$ are marked as \emph{relational}.
\end{enumerate}

\smallskip
Clauses~(V1) and (V2) are standard in logic, except that, traditionally, relations are viewed as separate category, not as a special functions.

\smallskip
\begin{enumerate}[leftmargin=30pt]
\item[(V3)] $f$ may be marked as \emph{dynamic}; if not then $f$ is called \emph{static}.  Nullary static symbols are called \emph{constants}; nullary dynamic symbols are called \emph{variables}.
\end{enumerate}

\smallskip
Clause~(V3) is related to our use of structures as states of algorithms. The intention is that, during computation, only dynamic functions may be assigned new values. We say that a term is \emph{static} if it involves only static functions.

\smallskip
We presume that every vocabulary contains the following \emph{obligatory} symbols which are all static.
\begin{itemize}
\item Constants \true\ and  \false\ (read ``true'' and ``false''), unary \Bool, and the standard propositional connectives. All these symbols are relational.
\item Constant 0, and unary \Num, \Inc, and \Dec. Of these four symbols, only \Num\ is relational.

\item Constant \nil\ (called \texttt{undef} in \cite{G141} and other early papers) and the (binary) equality sign =. Of these two symbols, only the equality sign is relational.
\end{itemize}

\smallskip
\begin{enumerate}[leftmargin=30pt]
\item[(V4)] Every dynamic symbol $f$ is assigned a static term, the \emph{default term} of $f$. If $f$ is relational then so is its default term.
\end{enumerate}

\smallskip
Clauses (V2) and (V4) constitute rudimentary typing which is sufficient for our purposes in this paper.
As a rule, the default term for any relational symbol is \false.
If a variable $v$ is supposed to take numerical values, then typically the default term for $v$ would be 0, but it could be 1.
This concludes the definition of vocabularies.

\smallskip
If \U\ and $\U'$ are vocabularies, we write
 $\U \subseteq \U'$, and we say that $\U$ is \emph{included} in $\U'$ and that $\U'$ \emph{includes} or \emph{extends} \U, if every \U\ symbol belongs to $\U'$ and has the same metadata in $\U'$.

\subsection{Structures}\mbox{}
\label{sub:sem}

A {\em structure} $X$ of vocabulary $\U$ is a nonempty set $|X|$, the \emph{universe} or \emph{base set} of $X$, together with interpretations of the function symbols in
$\U$.
The vocabulary \U\ may be denoted $\Voc(X)$.

An $r$-ary function symbol $f$ is interpreted as a function $f: |X|^r \to |X|$ and is called a \emph{basic function} of $X$.
If $f$ is nullary then $f$ is just a name of an element of (the universe of) $X$.
If $f$ is dynamic and $d$ is the default term for $f$, then the value (denoted by) $d$ is the \emph{default value} of $f$.

If $f$ is relational, then the elements $\true$ are $\false$ are the only possible values of $f$.
If $f(\bx)=\true$, we say that $f$ is true (or holds) at \bx; otherwise we say that $f$ is false (or fails) at \bx.
If $f,g$ are relations of the same arity $r$, then $f,g$ are \emph{equivalent} in $X$ if their values at every $r$-tuple of elements of $X$ are the same.

Any basic relation $f$ is the characteristic function of the set \set{x: f(x) = \true}. It is often convenient to treat $f$ as that set. We will do that in \S\ref{sec:ex}.

\begin{remark}[Names and denotations]
Syntactic objects often denote semantical objects. For example, vocabulary symbols denote basic functions.
Different conventions may be used for disambiguation, e.g.\ a basic function may be denoted $f_X$. We will use no disambiguation convention in this paper. It should be clear from the context whether a symbol means a syntactic or semantic object. \qef
\end{remark}

The equality sign has its usual meaning.
\Bool\ comprises (the values of) \true, \false\  which, together with the propositional connectives, form a two-element Boolean algebra.

Given a structure $X$, the \emph{value} $\V_X \big(f(t_1,\dots,t_r)\big)$ of a $\Voc(X)$ term $f(t_1,\dots,t_r)$ in $X$ is defined by induction:
\begin{equation}\label{eval}
\V_X \big(f (t_1,\dots,t_r) \big) =
 f\big(\V_X (t_1), \dots, \V_X (t_r)\big).
\end{equation}
Again, the case $r=0$ is the base of induction.

Instead of $\Inc(x)$, we will write $x+1$ and $x-1$.
\Num\ comprises the values of terms $0, 0+1, (0+1)+1, \dots$ which are all distinct. These values are denoted $0, 1, 2, \dots$ respectively;
we call them the \emph{natural numbers} of structure $X$, and we say that these values are \emph{numerical}.
\Dec\ is interpreted as expected as well.
Instead of $\Dec(x)$, we write $x-1$.
$\Dec(0) = \nil$.
The value of \nil\ is neither Boolean nor numerical.

\begin{remark}[Totality]
In accordance with \S\ref{sub:seq}, all basic functions are total. In applications, various error values may arise, in particular \emph{timeout}. But, for our purposes in this paper (as in \cite{G141}), an error value is just another value.
\end{remark}

A {\em location} in a structure $X$ is a pair $\ell = (f,\bx)$ where $f$ is a dynamic symbol in $\Voc(X)$ of some arity $r$ and \bx\ is an $r$-tuple of elements of $X$.
The value $f(\bx)$ is the \emph{content} of location $\ell$.

An {\em update of location} $\ell = (f,\bx)$ is a pair $(\ell,y)$, also denoted $(\ell \lat y)$, where $y$ an element of $X$; if $f$ is relational then $y$ is Boolean.
To \emph{execute} an update $(\ell\ \lat y)$ in $X$, replace the current content $\V_X(f(\bx))$ of $\ell$ with $y$, i.e., set $f_X(\bx)$ to $y$.
An update $(\ell \lat y)$ of location $\ell = (f,\bx)$ is \emph{trivial} if $y = f(\bx)$.

An \emph{update of structure} $X$ is an update of any location in $X$.
A set $\Delta$ of updates of $X$ is \emph{contradictory} if it contains updates $(\ell \lat y_1)$ and $(\ell \lat y_2)$ with distinct  $y_1, y_2$; otherwise $\Delta$ is \emph{consistent}.

\subsection{Sequential abstract state machines}
\label{sub:asm}\mbox{}

\smallskip
Fix a vocabulary \U\ and restrict attention to function symbols in \U\ and terms over \U.

\begin{definition}[Syntax of rules]
\emph{Rules} over vocabulary $\U$ are defined by induction.
\begin{enumerate}
\item An \emph{assignment rule} or simply \emph{assignment} has the form\quad
    \begin{equation}\label{assign}
    f(t_1,\dots,t_r):=t_0
    \end{equation}
    where $f$, the \emph{head} of the assignment, is dynamic, $r = \Arity(f)$, and $t_0,\dots,t_r$ are terms. If $f$ is relational, then the head function of $t_0$ is relational. The assignment~\eqref{assign} may be called an $f$ assignment.
\item A \emph{conditional rule} has the form
    \begin{equation}\label{ite}
    \If\ \beta\ \Then\ R_1\ \Else\ R_2
    \end{equation}
    where $\beta$ is a Boolean-valued term and $R_1, R_2$ are \U\ rules.
\item A \emph{parallel rule} has the form
    \begin{equation}\label{par}
    R_1 \prl R_2 \prl\cdots\prl R_k
    \end{equation}
    where $k$ is a natural number and $R_1, \dots, R_k$ are \U\ rules. In case $k=0$, we write \texttt{Skip}. \qef
\end{enumerate}
\end{definition}

\begin{definition}[Semantics of rules]
Fix an \U\ structure $X$.
Every $\U$ rule $R$ generates a finite set $\Delta$ of updates in $X$.
$R$ \emph{fails} in $X$ if $\Delta$ is contradictory;
otherwise $R$ \emph{succeeds} in $X$.
To \emph{fire} (or \emph{execute}) rule $R$ that succeeds in structure $X$ means to execute all $\Delta$ updates in $X$.

\begin{enumerate}
\item An assignment $f(t_1,\dots,t_r):=t_0$ generates a single update $(\ell,\V_X(t_0))$ where
    $ \ell = \big(f, (\V_X(t_1), \dots, \V_X(t_r)\big) $.
\item A conditional rule\quad $\If \beta\ \Then R_1\ \Else R_2$\quad works exactly as $R_1$, if $\beta =\true$ in $X$, and exactly as $R_2$ otherwise.
\item A parallel rule $R_1 \prl R_2 \prl\cdots\prl R_k$ generates the union of the update sets generated by rules $R_1, \dots, R_k$ in $X$. \qef
    %In particular, if $k=0$, then $R$ generates no updates (and is called \texttt{Skip} sometimes).
\end{enumerate}
\end{definition}

\begin{definition}
A \emph{sequential ASM} $A$ is given by the following three components.
\begin{enumerate}
\item A vocabulary \U, denoted $\Voc(A)$.
\item A nonempty collection of $\Voc(A)$ structures, closed under isomorphisms. These are the \emph{initial states} of $A$. \qef
\item A $\Voc(A)$ rule, called the \emph{program} of $A$ and denoted \PA.
 \end{enumerate}
\end{definition}
\noindent

As we mentioned in \S\ref{sec:intro}, every sequential algorithm $A$ is behaviorally identical to some sequential ASM $B$; they have the same initial states and the same state-transition function.

In the rest of the paper, by default, all ASMs are sequential.

Consider an ASM $A$.
A $\Voc(A)$ structure $X$ is \emph{terminal} for $A$ if \PA\ produces no updates (not even trivial updates%
\footnote{In applications, trivial updates of $A$ may mean something for its environment.}) in $X$.
A \emph{partial computation} of $A$ is a finite sequence $X_0, X_1, \dots, X_n$ of $\U$ structures where
\begin{itemize}
\item $X_0$ is an initial state of $A$,
\item every $X_{i+1}$ is obtained by executing \PA\ in $X_i$, and
\item no structure in the sequence, with a possible exception of $X_n$, is terminal.
\end{itemize}
If $X_n$ is terminal, then the partial computation is \emph{terminating}.
A \emph{(reachable) state} of $A$ is an $\Voc(A)$ structure that occurs in some partial computation of $A$.

A Boolean expression $\gamma$ is a \emph{green light} for an ASM $A$ if it holds in the nonterminal states of $A$ and fails in the terminal states.

\begin{lemma}\label{lem:green}
Every ASM has a green light.
\end{lemma}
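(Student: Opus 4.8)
The plan is to read the green light straight off the syntax of the program $\PA$: whether a rule produces a nonempty update set in a structure $X$ is determined by the truth values in $X$ of the guards occurring in the rule, and this determination is itself expressible by a Boolean term. Concretely, I would define, by induction on $\Voc(A)$ rules $R$, a term $\mathrm{active}(R)$ by: $\mathrm{active}\big(f(t_1,\dots,t_r):=t_0\big):=\true$; $\mathrm{active}(\If\ \beta\ \Then R_1\ \Else R_2):=\big((\beta=\true)\wedge\mathrm{active}(R_1)\big)\vee\big(\neg(\beta=\true)\wedge\mathrm{active}(R_2)\big)$; and $\mathrm{active}(R_1\prl\cdots\prl R_k):=\mathrm{active}(R_1)\vee\cdots\vee\mathrm{active}(R_k)$, which for $k=0$ --- i.e.\ for \texttt{Skip} --- is the empty disjunction $\false$. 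Each clause builds $\mathrm{active}(R)$ from $\true$, $\false$, the equality sign, and the propositional connectives (all obligatory and relational) applied to arguments that are themselves relational terms, so a trivial induction shows $\mathrm{active}(R)$ is a Boolean term.

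The heart of the argument is the claim, proved by induction on $R$, that for every $\Voc(A)$ structure $X$ the update set generated by $R$ in $X$ is nonempty if and only if $\V_X(\mathrm{active}(R))=\true$. The assignment case is immediate, since an assignment always generates exactly one update (trivial or not). The conditional case follows from the semantics of $\If\dots\Then\dots\Else$: if $\V_X(\beta)=\true$ the rule behaves as $R_1$, and correspondingly the first disjunct governs, giving $\V_X(\mathrm{active}(R))=\V_X(\mathrm{active}(R_1))$; symmetrically otherwise. The parallel case follows because the update set of $R_1\prl\cdots\prl R_k$ is the union of those of $R_1,\dots,R_k$, and a finite union of sets is nonempty iff one of them is --- the empty union being empty, matching $\false$. (Here ``nonempty update set'' subsumes the contradictory, i.e.\ failing, case, consistently with the fact that a state in which $\PA$ fails is still nonterminal.)

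Finally, set $\gamma:=\mathrm{active}(\PA)$. A structure $X$ is terminal for $A$ exactly when $\PA$ produces no updates in $X$, i.e.\ when the update set of $\PA$ in $X$ is empty, i.e.\ when $\V_X(\gamma)=\false$; hence $\gamma$ holds in every nonterminal state and fails in every terminal state, so it is a green light --- indeed one that correctly classifies all $\Voc(A)$ structures, not merely the reachable ones. I do not expect a real obstacle. The only points needing care are keeping $\mathrm{active}(R)$ honestly Boolean-valued, which is why the conditional clause tests $(\beta=\true)$ rather than $\beta$ in case a guard is not a priori $\{\true,\false\}$-valued, and the degenerate cases --- the program \texttt{Skip}, where $\gamma$ becomes $\false$ and every state is terminal, and a program with no reachable terminal state, where $\gamma$ collapses to a term that is always true --- both of which are handled uniformly by the same induction.
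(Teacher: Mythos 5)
Your proposal is correct and is essentially the paper's own proof: the same inductive construction ($\true$ for assignments, disjunction for parallel composition, $(\beta\wedge\gamma_{R_1})\vee(\neg\beta\wedge\gamma_{R_2})$ for conditionals), with your version merely spelling out the correctness induction and the degenerate cases that the paper leaves implicit.
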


\begin{proof}
By induction on rule $R$, we construct a green light $\gamma_R$ for any ASM with program $R$. If $R$ is an assignment, set $\gamma_R =\true$. If $R$ is the parallel composition of rules $R_i$, set $\gamma_R = \bigvee_i \gamma_{R_i}$. If $R = \If\ \beta\ \Then\ R_1\ \Else\ R_2$, set
$\gamma_R = \big(\beta \land \gamma_{R_1}\big) \lor
              \big(\neg\beta \land \gamma_{R_2}\big)$.
\end{proof}

In examples and applications, typically, such conditions are easily available. Think of terminal states of finite automata or of halting control states of Turing machines. In a while-loop program, the while condition is a green light.

\section{Reducts and expansions}
\label{sec:expand}

In mathematical logic, a structure $X$ is a \emph{reduct} of a structure $Y$ if $\Voc(X) \subseteq \Voc(Y)$, the two structures have the same universe, and every $\Voc(X)$ symbol $f$ has the same interpretations in $X$ and in $Y$.
If $X$ is a reduct of $Y$, then $Y$ is an \emph{expansion} of $X$.
For example, the field of real numbers expands the additive group of real numbers.

We say that an expansion $Y$ of a structure $X$ is \emph{uninformative} if the additional basic functions of $Y$ (which are not basic functions of $X$) are dynamic and take only their default values in $Y$. (The default values are defined in \S\ref{sub:sem}.) Clearly, $X$ has a unique uninformative expansion to \VB.

\begin{definition}\label{def:expand}
An ASM $B$ is a \emph{faithful expansion} of an ASM $A$ if the following conditions hold.
\begin{enumerate}[leftmargin=30pt]
\item[(E1)] $\VA\subseteq\VB$. The symbols in \VA\ are the \emph{principal} symbols of \VB, and their interpretations in \VB\ structures are \emph{principal} basic functions;
    the other \VB\ symbols and their interpretations are \emph{ancillary}.
\item[(E2)] All ancillary symbols are dynamic, and the initial states of $B$ are the uninformative expansions of the initial states of $A$.
\item[(E3)] If $Y$ is a $\Voc(B)$ structure and $X$ the $\Voc(A)$ reduct of $Y$, then the principal-function updates (including trivial updates) generated by \PB\ in $Y$ coincide with the those generated by \PA\ in $X$, and the ancillary-function updates generated by \PB\ in $Y$ are consistent. \qef
\end{enumerate}
\end{definition}

\begin{corollary}
Suppose that $B$ is a faithful expansion of an ASM $A$, then the following claims hold.
\begin{enumerate}
\item If $X_0, \dots, X_n$ is a partial computation of $A$ then there is a unique partial computation $Y_0, \dots, Y_n$ of $B$ such that every $X_i$ is the $\Voc(A)$ reduct of the corresponding $Y_i$.
\item If states $Y_0, \dots, Y_n$ of $B$ form a partial computation of $B$, then their \VA\ reducts form a partial computation $X_0, \dots, X_n$ of $A$.
\item The $\Voc(A)$ reduct of a state of $B$ is a state of $A$.
\end{enumerate}
\end{corollary}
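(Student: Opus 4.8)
The plan is to derive all three claims from a single one-step \emph{simulation lemma}: for every $\Voc(B)$-structure $Y$ with $\Voc(A)$-reduct $X$, (i) $\PB$ succeeds in $Y$ if and only if $\PA$ succeeds in $X$; (ii) when they succeed, the $\Voc(A)$-reduct of the structure obtained by firing $\PB$ in $Y$ is the structure obtained by firing $\PA$ in $X$; and (iii) $Y$ is terminal for $B$ if and only if $X$ is terminal for $A$. With this lemma in hand the three claims fall out by an induction on the length of the computation.

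To prove the lemma, let $\Delta$ be the update set generated by $\PB$ in $Y$, and split it into its principal-function part $\Delta_{\mathrm{pr}}$ and its ancillary-function part $\Delta_{\mathrm{anc}}$. By (E3), $\Delta_{\mathrm{pr}}$ equals the update set generated by $\PA$ in $X$, and $\Delta_{\mathrm{anc}}$ is consistent. Since $\VA$ and the ancillary part of $\VB$ are disjoint, the locations updated in $\Delta_{\mathrm{pr}}$ and those updated in $\Delta_{\mathrm{anc}}$ are disjoint, so $\Delta=\Delta_{\mathrm{pr}}\cup\Delta_{\mathrm{anc}}$ is consistent exactly when $\Delta_{\mathrm{pr}}$ is; this is (i). For (ii): firing $\PB$ in $Y$ changes only the contents of the locations named in $\Delta$ and leaves the universe alone, so the $\Voc(A)$-reduct of the result agrees with $X$ except at the locations named in $\Delta_{\mathrm{pr}}$, where its contents are those prescribed by $\Delta_{\mathrm{pr}}$ --- that is, it is precisely the result of firing $\PA$ in $X$. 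For (iii): $Y$ is terminal iff $\Delta=\varnothing$ iff $\Delta_{\mathrm{pr}}=\varnothing$ and $\Delta_{\mathrm{anc}}=\varnothing$; since $\Delta_{\mathrm{pr}}$ is the update set of $\PA$ in $X$, $Y$ terminal immediately forces $X$ terminal, and the converse is the one point that needs attention (see below).

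Now assemble the claims, using that (as noted in \S\ref{sec:expand}) every structure has a unique uninformative expansion to $\VB$, that by (E2) the initial states of $B$ are exactly these expansions of the initial states of $A$, and that the $\Voc(A)$-reduct of the uninformative expansion of a structure is that structure. For claim (1), given a partial computation $X_0,\dots,X_n$ of $A$, put $Y_0=$ the uninformative expansion of $X_0$ and inductively $Y_{i+1}=$ the result of firing $\PB$ in $Y_i$; at each $i<n$ the state $X_i$ is non-terminal, so by (iii) $Y_i$ is non-terminal, and $\PA$ succeeds in $X_i$, so by (i) $\PB$ succeeds in $Y_i$ and by (ii) the reduct of $Y_{i+1}$ is $X_{i+1}$. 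Thus $Y_0,\dots,Y_n$ is a partial computation of $B$ with the required reducts, and it is the only one, since $Y_0$ is forced by uniqueness of the uninformative expansion and each $Y_{i+1}$ is forced as the result of firing $\PB$ in $Y_i$. Claim (2) is the same induction read off the given data: for a partial computation $Y_0,\dots,Y_n$ of $B$ with reducts $X_i$, the reduct $X_0$ is an initial state of $A$, each $X_{i+1}$ is the result of firing $\PA$ in $X_i$ by (i)–(ii), and no $X_i$ with $i<n$ is terminal, for otherwise (iii) would make $Y_i$ terminal, contradicting that $Y_0,\dots,Y_n$ is a partial computation of $B$. Claim (3) then follows: any state of $B$ occurs in some partial computation of $B$, whose reducts form a partial computation of $A$ by claim (2), so its reduct is a state of $A$.

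The bookkeeping in (i)–(ii) --- disjointness of principal and ancillary locations, and the fact that passing to the $\Voc(A)$-reduct commutes with firing a rule --- is routine. The step I expect to require genuine care is the remaining half of (iii), namely ``$X$ terminal $\Rightarrow$ $Y$ terminal'': one must know that the ancillary bookkeeping of $B$ generates no update once the principal rule $\PA$ is quiescent, i.e.\ that $\Delta_{\mathrm{anc}}=\varnothing$ whenever $\Delta_{\mathrm{pr}}=\varnothing$. This is exactly the clause ensuring that $B$ ``stops exactly when $A$ does,'' and it is what claim (2) (and hence claim (3)) ultimately rests on.
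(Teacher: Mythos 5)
The paper gives no proof of this corollary at all---it is presented as an immediate consequence of Definition~\ref{def:expand}---so your one-step simulation lemma followed by induction on the length of the computation is precisely the argument the author leaves implicit, and for claim (1) it is complete and correct: the disjointness of principal and ancillary locations gives your (i) and (ii), and the direction ``$X$ nonterminal $\Rightarrow$ $Y$ nonterminal'' is immediate from (E3) because the principal updates of \PB\ in $Y$ are exactly the updates of \PA\ in $X$. The uniqueness argument via the unique uninformative expansion is also right.

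The step you flagged and left open---``$X$ terminal $\Rightarrow$ $Y$ terminal,'' i.e.\ that the ancillary updates vanish when the principal ones do---is, however, a genuine gap in your proof of claims (2) and (3), and your suspicion about it is well founded: it simply does not follow from (E1)--(E3) as stated, since (E3) only requires the ancillary updates to be \emph{consistent}. A program that, say, keeps incrementing a fresh counter after \PA\ becomes quiescent satisfies (E1)--(E3) literally, yet for it claim (2) fails (the reducts repeat a terminal state in mid-sequence). So the step cannot be closed by bookkeeping; it must be taken as part of the intended meaning of faithfulness, and indeed the paper verifies exactly this property separately for its particular construction in the proof of the Reversification Theorem (``if $X$ is terminal, \dots\ no \Instrn\ fires in $Y$, so that $Y$ is terminal as well''). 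Two further remarks. First, claim (3) does not actually need the problematic direction, although your derivation of it from claim (2) does: if some reduct $X_i$ with $i<j$ is terminal, take the least such $i$; then $X_0,\dots,X_i$ is a partial computation of $A$, and by (E3) the principal updates are empty from step $i$ on, so $X_j=X_i$ is a state of $A$. Second, with (E3) strengthened by the terminality clause (the reading under which ``$B$ stops exactly when $A$ does''), your argument for claim (2) goes through verbatim.
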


If an ASM $A$ computes a function $F$, one would expect that any faithful expansion of $A$ computes function $F$ as well. To confirm this expectation, we need to formalize what it means to compute a function. In the context of this paper, every ASM state is endowed with a special copy of the set \N\ of natural numbers. This makes the desired formalization particularly easy for numerical partial functions $F: \N^k\to \N$.

\begin{corollary}\label{cor:fun}
Suppose that an ASM $A$ computes a partial numerical function $F: \N^k\to \N$ in the following sense:
\begin{enumerate}
\item $A$ has input variables $\iota_1, \dots, \iota_n$ taking numerical values in the initial states, and $A$ has an output variable $o$,
\item all initial states of $A$ are isomorphic except for the values of the input variables, and
\item the computation of $A$ with initial state $X$ eventually terminates if and only if $F$ is defined at tuple $\bx = \big(\V_X(\iota_1), \dots, \V_X(\iota_n)\big)$, in which case the final value of $o$ is $F(\bx)$.
\end{enumerate}
Then every faithful expansion of $A$ computes $F$ in the same sense. \qef
\end{corollary}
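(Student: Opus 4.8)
The plan is to derive everything from the preceding corollary, which already fixes the relationship between the partial computations of an ASM and those of a faithful expansion of it. So I would fix a faithful expansion $B$ of $A$ and check, one at a time, the three clauses of the statement ``$B$ computes $F$''.

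For clauses (1) and (2): the symbols $\iota_1,\dots,\iota_n$ and $o$ lie in $\VA\subseteq\VB$ by (E1), hence are principal symbols of $B$ with unchanged metadata. By (E2) the initial states of $B$ are precisely the uninformative expansions of the initial states of $A$, so each initial state $Y$ of $B$ agrees with its $\VA$-reduct $X$ (an initial state of $A$) on all principal basic functions, and the $\VA$-reduct is a bijection from the initial states of $B$ onto those of $A$. In particular each $\iota_j$ takes in $Y$ exactly the numerical value it takes in $X$, which gives clause (1). For clause (2) I would take initial states $Y,Y'$ of $B$ with the same input values, let $X,X'$ be their $\VA$-reducts, and use the hypothesis on $A$ to get an isomorphism $\varphi\colon X\to X'$. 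Since every ancillary symbol is dynamic by (E2), its default term is static over principal symbols, so $\varphi$ carries default values to default values; and in $Y,Y'$ the ancillary functions take only their default values. Hence $\varphi$ is already an isomorphism $Y\to Y'$, which is clause (2).

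Clause (3) is the substantive one. Fix an initial state $Y$ of $B$, let $X$ be its $\VA$-reduct, and set $\bx:=(\V_Y(\iota_1),\dots,\V_Y(\iota_n))=(\V_X(\iota_1),\dots,\V_X(\iota_n))$. By parts (1) and (2) of the preceding corollary, taking $\VA$-reducts puts the partial computations of $B$ that begin at $Y$ in length-preserving bijection with the partial computations of $A$ that begin at $X$, and this bijection preserves the property of ending in a terminal state. Hence the computation of $B$ from $Y$ terminates exactly when the computation of $A$ from $X$ terminates, i.e.\ --- by the hypothesis on $A$ --- exactly when $F$ is defined at $\bx$. When it does terminate, the final state $Y_m$ of $B$ lies over the final state $X_m$ of $A$, so, $o$ being principal, $\V_{Y_m}(o)=\V_{X_m}(o)=F(\bx)$. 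This is clause (3), and the proof is complete.

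The step I expect to be the main obstacle is the termination equivalence --- more precisely the direction ``$A$ halts $\Rightarrow$ $B$ halts'' --- since one must rule out the possibility that $B$ keeps ticking, doing nothing but bookkeeping, after $A$ has reached a terminal state. This is exactly what the reduct-correspondence of the preceding corollary forbids: any partial computation of $B$ projects onto a partial computation of $A$, in which no state but possibly the last may be terminal, so $B$ can take no further step once its $\VA$-reduct has gone terminal for $A$. The remaining ingredients --- that an isomorphism of $\VA$-reducts automatically extends over the ancillary part, and that $\bx$ and the output are read off principal data --- are routine, resting only on the observation that default terms are static and mention no ancillary symbol.
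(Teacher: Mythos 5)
Your proof is correct and follows exactly the route the paper intends: Corollary~\ref{cor:fun} is stated there without proof, as an immediate consequence of Definition~\ref{def:expand} and the preceding corollary on reducts of partial computations, which is precisely what you use. In particular, your treatment of the delicate direction (``$A$ halts $\Rightarrow$ $B$ halts'') by observing that one more step of $B$ would project to a partial computation of $A$ containing a terminal state before its last entry is the right way to extract termination preservation from that corollary.
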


Corollary~\ref{cor:fun} can be generalized to computing more general functions and to performing other tasks, but this is beyond the scope of this paper.

An ASM may be faithfully expanded by instrumenting its program for monitoring purposes. For example, if you are interested how often a particular assignment \s\ fires, replace \s\ with a parallel composition
$$\s \prl \k:=\k+1$$
where a fresh variable $\k$, initially zero, is used as a counter. A similar counter is used in our Reversibility Theorem below.

\section{Reversibility}
\label{sec:rev}

\begin{definition}
An ASM $B$ is \emph{reversible (as is)} if there is an ASM $C$ which reverses all $B$'s computations in the following sense.
If $Y_0, Y_1, \dots, Y_n$ is a partial computation of $B$, then $Y_n, Y_{n-1}, \dots, Y_0$ is a terminating computation of $C$.
\end{definition}

\begin{theorem}[Reversification Theorem]\label{thm:rev}
Every ASM $A$ has a faithful reversible expansion.
\end{theorem}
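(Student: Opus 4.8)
The plan is to instrument $\PA$ with a step-indexed \emph{history} that, at each step, records exactly the data needed to undo that step, and then let $C$ replay the history backwards. Fix a green light $\gamma$ for $A$ (Lemma~\ref{lem:green}), and let $\sigma_1,\dots,\sigma_m$ be the occurrences of assignments inside $\PA$, say $\sigma_j$ of the form $f_j(t^j_1,\dots,t^j_{r_j}):=t^j_0$. I would let $\Voc(B)$ extend $\Voc(A)$ by fresh dynamic symbols: a nullary clock $\kappa$ with default term $0$; for each $j$, unary symbols $a^j_1,\dots,a^j_{r_j}$ with default $\nil$, a unary $v^j$ (declared relational exactly when $f_j$ is, with default $\false$ in that case and $\nil$ otherwise), and a unary relational $\varphi_j$ with default $\false$. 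Obtain $\PB$ from $\PA$ by replacing each occurrence $\sigma_j$ with
\[
  \sigma_j \prl a^j_1(\kappa):=t^j_1 \prl\cdots\prl a^j_{r_j}(\kappa):=t^j_{r_j} \prl v^j(\kappa):=f_j(t^j_1,\dots,t^j_{r_j}) \prl \varphi_j(\kappa):=\true
\]
(the $a$-assignments are absent when $r_j=0$), and then placing the whole rule in parallel with $\If\ \gamma\ \Then\ \kappa:=\kappa+1\ \Else\ \texttt{Skip}$. The initial states of $B$ are the uninformative expansions of those of $A$.

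Because the subrule inserted for $\sigma_j$ sits at the same node of the rule tree as $\sigma_j$, it fires in a state $Y$ exactly when $\sigma_j$ does; so whenever $B$ makes a step, the slot indexed by the current clock value gets, in $a^j_\bullet$, the arguments of the location $\sigma_j$ overwrites, in $v^j$ its \emph{old} content, and in $\varphi_j$ a flag recording that $\sigma_j$ fired. Meanwhile $\gamma$ makes the clock advance precisely when $A$ makes a genuine step, so along any partial computation of $B$ the clock strictly increases and has value $i$ in $Y_i$; in particular each slot of the history is written at most once. Checking Definition~\ref{def:expand} is then routine: the principal-function updates of $\PB$ in $Y$ are exactly those of $\PA$ in the $\Voc(A)$-reduct of $Y$, and the added updates write pairwise distinct ancillary locations (the clock, and the index-$\kappa$ slot of each history array), hence are consistent. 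Thus $B$ is a faithful expansion of $A$. Moreover $\PB$ produces no update in $Y$ iff $\PA$ produces none in the $\Voc(A)$-reduct $X$ of $Y$ — if $X$ is terminal then $\gamma$ fails there and no $\sigma_j$ is active, so the clock does not move either — so $B$ halts exactly when $A$ does.

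For the reverser I would take $\Voc(C)=\Voc(B)$, let every $\Voc(B)$-structure be an initial state of $C$ (only the behavior on the final structures of $B$'s computations will matter), and let $\PC$ be
\[
  \If\ \kappa=0\ \Then\ \texttt{Skip}\ \Else\ \bigl(\kappa:=\kappa-1 \prl R_1 \prl\cdots\prl R_m\bigr),
\]
where $R_j$ is $\If\ \varphi_j(\kappa-1)\ \Then\ U_j\ \Else\ \texttt{Skip}$ and $U_j$ both restores the location written by $\sigma_j$ at the step of index $\kappa-1$ and clears that slot:
\[
  f_j\bigl(a^j_1(\kappa-1),\dots,a^j_{r_j}(\kappa-1)\bigr):=v^j(\kappa-1) \prl a^j_1(\kappa-1):=\nil \prl\cdots\prl a^j_{r_j}(\kappa-1):=\nil \prl v^j(\kappa-1):=d_j \prl \varphi_j(\kappa-1):=\false,
\]
with $d_j$ the default term of $v^j$.

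Finally I would verify that $C$ reverses every partial computation $Y_0,\dots,Y_n$ of $B$; write $X_i$ for the $\Voc(A)$-reduct of $Y_i$, so that, by faithfulness, $X_0,\dots,X_n$ is a partial computation of $A$. In $Y_{i+1}$ we have $\kappa=i+1$, and for every $j$ the entries $a^j_\bullet(i)$, $v^j(i)$, $\varphi_j(i)$ still hold what $B$ wrote on the step $Y_i\mapsto Y_{i+1}$, since later steps touch only slots of index greater than $i$; thus $\varphi_j(i)=\true$ iff $\sigma_j$ fired at step $i$, and in that case $\bigl(f_j,(a^j_1(i),\dots,a^j_{r_j}(i))\bigr)$ is exactly the location $\sigma_j$ overwrote and $v^j(i)$ is its content in $X_i$. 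Hence executing $\PC$ in $Y_{i+1}$ decrements the clock to $i$, resets the index-$i$ slot of every history array to its default, and restores every principal location in which $X_{i+1}$ differs from $X_i$ to its $X_i$-value; these updates are mutually consistent because two occurrences that overwrote a common location recorded the same old content, so their restores agree. The outcome is precisely $Y_i$. Since $\kappa=0$ in $Y_0$, $Y_0$ is terminal for $C$, while every $Y_i$ with $i\ge1$ has $\kappa\ge1$ and is non-terminal for $C$; hence $Y_n,Y_{n-1},\dots,Y_0$ is a terminating computation of $C$, and $B$ is a faithful reversible expansion of $A$. I expect the one genuinely delicate point — the reason Bennett's step-by-step history does not transcribe verbatim — to be that a single ASM step fires a whole \emph{set} of updates, so undoing it requires knowing which assignment occurrences were active and with which arguments; this is why the history is indexed by occurrences of assignments in $\PA$ rather than being a flat tape, and why the flags $\varphi_j$ are indispensable (the default $\nil$ of the argument arrays is itself a legitimate state element, so an empty slot cannot be recognized otherwise). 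The green light is the second small but essential ingredient: without it the clock would keep ticking after $A$ has already halted.
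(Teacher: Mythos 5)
Your proposal is correct and follows essentially the same route as the paper: a step counter, per-assignment-occurrence firing flags, and step-indexed records of the argument values and old content, with the reverser decrementing the counter, restoring the overwritten locations, and clearing the used history slot (your flags, argument arrays and old-value arrays are exactly the paper's $\Fn$, $\fno,\dots,\fnr$, $\fnz$, and your clock is \k). The only departures are cosmetic: you fold the green-light optimization (a single guarded $\k:=\k+1$) into the main construction, where the paper's proof places the increment inside every \Instrn\ and defers the green light to Remark~\ref{rem:elegant}, and you index each record by the pre-step clock value $\k$ instead of $\k+1$, with the reverser reading slot $\k-1$ instead of $\k$.
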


\begin{proof}
Enumerate the (occurrences of the) assignments in \PA\ in the order they occur:
\[ \s_1, \s_2, \dots, \s_N\]
It is possible that $\s_i, \s_j$ are identical even though $i\ne j$. The metavariable $n$ will range over numbers $1,2,\dots,N$.
For each $n$, let \fn\ be the head of \sn, $r_n = \Arity(\fn)$, and \tnz, \tno, \dots, \tnr\ the terms such that
\[ %begin{equation}\label{sn}
\sn = \Big(\fn(\tno,\dots,\tnr) := \tnz\Big).
\] %end{equation}

We construct an expansion $B$ of $A$. The ancillary symbols of $B$ are as follows.
\begin{enumerate}
\item A variable \k.
\item For every $n$, a unary relation symbol $\Fn$.
\item For every $n$, unary function symbols
\fnz, \fno, \dots, \fnr.
\end{enumerate}

The default term for \k\ is 0. The default term for all relations $\Fn$ is \false.
The default term for all functions \fnz, \fno, \dots, \fnr\ is \nil.
Accordingly, the initial states of $B$ are obtained from the initial states of $A$ by setting $\k=0$, every $\Fn(x) = \false$, and every $f^n_i(x) = \nil$,

The intention is this. If $X_0, X_1, \dots, X_l$ is a partial computation of $B$, then for each $k=0,\dots,l$ we have the following.
\begin{enumerate}
\item The value of \k\ in $X_k$ is $k$, so that \k\ counts the number of steps performed until now; we call it a \emph{step counter}.
\item $\Fn(\k)$ holds in $X_{k+1}$ if and only if \sn\ fires in $X_k$.
\item The values of $\fno(\k), \dots, \fnr(\k)$ in $X_{k+1}$ record the values of the terms \tno, \dots, \tnr\ in $X_k$ respectively, and the value of $\fnz(\k)$ in $X_{k+1}$ records the value of the term $\fn(\tno,\dots,\tnr)$ in $X_k$.
\end{enumerate}

The program of $B$ is obtained from \PA\ by replacing every assignment \sn\ with $\Instrn$ (an allusion to ``instrumentation'') where

\begin{eatab}
\> \Instrn\ = \\[2pt]
\>\>\> $\sn\ \prl \k:=\k+1 \prl \Fn(\k+1):=\true \prl$\\[2pt]
\>\>\> $\fnz(\k+1) := f^n(\tno,\dots,\tnr) \prl$ \\[2pt]
\>\>\> $\fno(\k+1) := \tno \prl \dots \prl \fnr(\k+1) := \tnr$
\end{eatab}

It is easy to check that the conditions (E1)--(E3) of Definition~\ref{def:expand} hold, and $B$ is indeed a faithful expansion of $A$.
In particular, if $Y$ and $X$ are as in (E3) and $X$ is terminal, then no assignment \sn\ fires in $X$, and therefore no \Instrn\ fires in $Y$, so that $Y$ is terminal as well.

\begin{lemma}\label{lem:default}
If $Y_0, \dots, Y_k$ is a partial computation of $B$, then
\begin{itemize}
\item $\k=k$ in $Y_k$ and
\item if $j>k$ then $\Fn(j),\fnz(j), \dots, \fnr(j)$ have their default values in $Y_k$.
\end{itemize}
\end{lemma}

\begin{proof}[Proof of lemma]
Induction on $k$.
\end{proof}

Now, we will construct an ASM $C$ which reverses $B$'s computations.
The vocabulary of $C$ is that of $B$, and any $\Voc(C)$ structure is an initial state of $C$. The program of $C$ is

\[
\If\ \k>0\ \Then\ \Big(\k:=\k - 1 \prl
  \PAR_n \Undon\Big)
\]

\noindent
where $\PAR$ is parallel composition, $n$ ranges over $\{1,2, \dots, N\}$, and

\begin{eatab}
\ $\Undon =$\\[2pt]
\>\> if $\F^n(\k)=\true$ then \\[2pt]
\>\>\> $\F^n(\k):=\false \prl$\\[2pt]
\>\>\> $\fn\Big(\fno(\k),\dots,\fnr(\k)\Big) := f_0^n(\k)
    \prl \fnz(\k) := \nil \prl$\\[2pt]
\>\>\> $\fno(\k) := \nil \prl \dots\ \prl \fnr(\k) := \nil$
\end{eatab}

\begin{lemma}\label{lem:main}
Let $Y$ be an arbitrary nonterminal $\Voc(B)$ structure such that all functions $\Fn$ and $f^n_i$ have their default values at argument $k = \V_Y(\k)$ in $Y$.
If \PB\ transforms $Y$ to $Y'$, then \PC\ transforms $Y'$ back to $Y$, i.e., \PC\ undoes the updates generated by \PB\ and does nothing else.
\end{lemma}

\begin{proof}[Proof of lemma]
The updates generated by \PB\ in $Y$ are the updates generated by the rules \Instrn\ such that \sn\ fires in $Y$.
Since $k=\V_Y(\k)$, we have $\V_{Y'}(\k) = k+1>0$, and therefore \PC\ decrements \k. It also undoes the other updates generated by the rules $\Instrn$. Indeed, suppose that \sn\ fires in $Y$.

To undo the update $\Fn(k+1) \lat \true$, \PC\ sets $\Fn(k+1)$ back to $\false$.

To undo the update $\fn\big(\tno, \dots, \tnr\big) \lat \tnz$, generated by \sn\ itself,
\PC\ sets $\fn\Big(\fno(k+1),\dots,\fnr(k+1)\Big)$ to $f_0^n(k+1)$. Recall that $\fno(k+1),\dots,\fnr(k+1)$ record $\tno, \dots, \tnr$ in $Y$ and $\fnz(k+1)$ records the value of $f^n\big(\tno, \dots, \tnr\big)$ in $Y$.
Thus, \PC\ sets $f^n\big(\tno, \dots, \tnr\big)$ back to its value in $Y$.

To undo the updates of
$ \fnz(k+1), \fno(k+1),\ \dots,\ \fnr(k+1) $,
\PC\ sets $\fnz(k+1), \fno(k+1), \dots, \fnr(k+1)$ back to \nil.

Thus, being executed in $Y'$, \PC\ undoes all updates generated by the rules \Instrn\ in $Y$. A simple inspection of \PC\ shows that it does nothing else. Thus, \PC\ transforms $Y'$ to $Y$.
\end{proof}

Now suppose that $Y_0,\dots,Y_n$ is a computation of $B$, $k<n$, $Y=Y_k$, and $Y'=Y_{k+1}$. Then $Y$ is nonterminal and, by Lemma~\ref{lem:default}, all $\Fn(\k)$ and $f^n_i(\k)$ have their default values in $Y$. By Lemma~\ref{lem:main}, \PC\ transforms $Y_{k+1}$ to $Y_k$. The $Y_0$ is a terminal state of $C$.
Thus, $C$ reverses all $B$'s computations.
\end{proof}

The proof of Reversification Theorem uses notation and the form of \PB\ which is convenient for the proof. In examples and applications, notation and \PB\ can be simplified.

\begin{remark}[Notation]\label{rem:notat}
Let $\sn$ be an assignment $\big(g(\tno,\dots,t^n_r):= \tnz\big)$ so that $f^n$ is $g$. If \sn\ is the only $g$ assignment in \PA\ or if every other $g$ assignment $\sigma_m$ in \PA\ is just another occurrence of \sn,
then the ancillary functions $f^n_i$ may be denoted $g_i$; no confusion arises. \qef
\end{remark}

Recall that a green light for an ASM $A$ is a Boolean-valued expression that holds in the nonterminal states and fails in the terminal states.

\begin{remark}[Green light and step counter]\label{rem:elegant}
In \PB, every \Instrn\ has an occurrence of the assignment $\k := \k+1$. A green light for $A$ provides an efficient way to deal with this excess. Notice that $B$ increments the step counter exactly when the green light is on.

\smallskip\noindent
Case~1: $\PA$ has the form\quad $\If \gamma\ \Then\ (k:=k+1 \parallel \Pi)$.

In this case, $\gamma$ is a green light for $A$, and $A$ has already a step counter, namely $k$.
Without loss of generality, $k$ is the step counter \k\ used by \PB; if not, rename one of the two variables.
Notice that the assignment $\sigma_1 = (k:=k+1)$ needs no instrumentation.
There is no need to signal firings of $\sigma_1$ because $\sigma_1$ fires at every step.
And, when a step is completed, we know the previous value of the step counter; there is no need to record it.

Let $\Instr^-(\Pi)$ be the rule obtained from $\Pi$ by first replacing every assignment \sn\ with the rule \Instrn\ defined in the proof of the program, and then removing all occurrences of $k := k+1$.
Then the program
\[ \If \gamma\ \Then \big(k:=k+1 \parallel \Instr^-(\Pi)\big) \]
has only one occurrence of $k:=k+1$ and is equivalent to \PB.

\smallskip\noindent
Case~2: $\PA = \big(\If \gamma\ \Then \Pi\big)$ where $\gamma$ is a green light for $A$ and the step counter \k\ of \PB\ does not occur in \PA.

The modified program\quad
$ \If \gamma\ \Then \big(\k:=\k+1 \parallel \Pi\big) $,\\
where \k\ is the step counter of $B$, is a faithful expansion of \PA, and thus Case~2 reduces to Case~1.

\smallskip\noindent
Case~3 is the general case.

By Lemma~\ref{lem:green}, every ASM program has a green light.
If $\gamma$ is a green light for $A$ and $\Pi = \PA$, then the program\quad $\If \gamma\ \Then \Pi$\quad is equivalent to \PA, and thus Case~3 reduces to Case~2. \qef
\end{remark}

The rules \Instrn\ and \Undon, described in the proof of the theorem, are the simplest in the case when $r_n=0$. In such a case, \sn\ has the form $v:=t$ where $v$ is a variable,
so that $f^n=v$ and $\tnz = t$. Then

\begin{eatab}
\ $\Instrn =$\\[2pt]
\>\ $\sn \prl \k:=\k+1 \prl
    \Fn(\k+1):=\true \prl v_0(\k+1):=v$,\\[2pt]
\ $\Undon =$\\[2pt]
\>\>\> if $\F^n(\k)=\true$ then \\[2pt]
\>\>\>\> $\F^n:=\false \prl v:= v_0(\k) \prl v_0(\k):= \nil$. \end{eatab}

\begin{lemma}\label{lem:simple}
Suppose that an assignment \sn\ to a variable $v$ can fire only at the last step of $A$ and that the update generated by \sn\ is never trivial.
Then, \Instrn\ and \Undon\ can be simplified to
\begin{align*}
\Instrn &= &&\sn \prl \k:=\k+1\\
\Undon  &=  &&\If\ v\ne d\ \Then\ v:= d
\end{align*}
where $d$ is the default term for $v$ in \VA.
\end{lemma}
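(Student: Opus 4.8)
The plan is to reuse the proof of Theorem~\ref{thm:rev} essentially verbatim, isolating the three spots where the simplified $\Instrn$ and $\Undon$ call for a separate check. First I would observe that, for this particular index $n$, the original $\Instrn$ touched only the ancillary symbols $\Fn$ and $v_0$ (i.e.\ $\fnz$) beyond $\sn$ and $\k:=\k+1$; dropping $\Fn$ and $v_0$ from $\Voc(B)$ and replacing $\Instrn$ by $\sn\prl\k:=\k+1$ leaves unchanged both the principal updates that $\PB$ produces (still exactly those of $\PA$) and the reachability structure of $\PB$ (its conditionals still test the same $\VA$-terms). Hence the verification of (E1)--(E3), the faithfulness of $B$, the fact that $B$ halts exactly when $A$ does, and Lemma~\ref{lem:default} for the indices $m\ne n$ all go through word for word.

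The real content is a single invariant: \emph{$v$ holds its default value $d$ in every non-terminal state of $B$} (equivalently, of $A$, since $v$ is principal). I would prove this by induction along a partial computation $Y_0,\dots,Y_l$. The base case uses that $v$ carries its default value in the initial states. The inductive step uses the hypothesis that $\sn$ can fire only at the last step of $A$, so that, together with the standing convention of Remark~\ref{rem:notat} that $v$ is written only by (occurrences of) $\sn$, the variable $v$ is never updated while the computation is still running: if $\sn$ fired in a transition $Y_j\to Y_{j+1}$ then $Y_{j+1}$ would be terminal, which is impossible when $j+1<l$. I expect this step, and more precisely making explicit the two side conditions it rests on ($v$ written only by $\sn$, and $v$ default in the initial states, both available in the settings where the lemma is used), to be the only real obstacle.

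With the invariant established, I would re-run the argument of Lemma~\ref{lem:main} for the simplified $\PC$. The rules $\mathrm{Undo}(m)$ with $m\ne n$ behave exactly as in the proof of the theorem. For $\Undon$ itself, executed at a state $Y'$ obtained from a non-terminal state $Y$ by one step of $\PB$: if $\sn$ did not fire in $Y$, then $v$ is unchanged and, by the invariant, still equals $d$ in $Y'$, so the guard $v\ne d$ fails, $\Undon$ does nothing, and this is correct because $\sn$ contributed no update; if $\sn$ did fire in $Y$, then by the last-step hypothesis $Y'$ is terminal, $\V_Y(v)=d$ by the invariant, and $\V_{Y'}(v)=\V_Y(t)\ne d$ since the update is non-trivial ($t$ being the right-hand side of $\sn$), so the guard holds and $\Undon$ resets $v$ to $d=\V_Y(v)$, undoing $\sn$ and making no other update. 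Chaining this along a computation, with Lemma~\ref{lem:default} and the invariant supplying the hypotheses of the Lemma~\ref{lem:main} analogue at each state, and noting that $\k=0$ in $Y_0$ so that $Y_0$ is terminal for $C$, shows that $C$ reverses $B$. Once the default-value invariant is in hand everything else is bookkeeping: the flag $\Fn$ is redundant because ``$v\ne d$'' already witnesses that $\sn$ fired at the final step, and the recorder $v_0$ is redundant because the pre-$\sn$ value of $v$ is always $d$.
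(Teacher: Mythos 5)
Your proposal is correct and takes essentially the same route as the paper: the paper's proof simply asserts that when \sn\ fires in a state $Y$ of $B$, then $Y$ is nonterminal with $v=d$, the next state $Y'$ is terminal with $v\ne d$, and the simplified \Undon\ undoes the simplified \Instrn\ --- which is exactly the content of your default-value invariant together with your two-case check of the guard $v\ne d$. The side conditions you make explicit ($v$ written only by occurrences of \sn, and $v$ having its default value in initial states) are left tacit in the paper, which just asserts $v=d$ in $Y$, so your spelling them out is a faithful elaboration rather than a different argument.
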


We do not assume that \sn\ fires at the last step of every computation of $A$, and so the expression $v\ne d$ is not necessarily a green light for $A$.

\begin{proof}
Suppose that \sn\ fires in state $Y$ of $B$. Then $Y$ is nonterminal, $v=d$ in $Y$, the next state $Y'$ is terminal, and $v\ne d$ in $Y'$. It is easy to see the simplified version of \Undon\ indeed undoes the updates generated by the simplified version of \Instrn\ in $Y$.
\end{proof}

\section{Examples}
\label{sec:ex}

To illustrate the reversification procedure of \S\ref{sec:rev}, we consider three simple examples. By the reversification procedure we mean not only the constructions in the proof of Reversification Theorem, but also Remarks~\ref{rem:notat} and \ref{rem:elegant} and Lemma~\ref{lem:simple}.
Unsurprisingly, in each case, the faithful reversible expansion produced by the general-purpose procedure can be simplified.

\subsection{Bisection algorithm}
\label{sub:bisect}\mbox{}

The well-known bisection algorithm solves the following  problem where \R\ is the field of real numbers.
Given a continuous function $F: \R\to\R$ and reals $a, b, \eps$ such that $F(a) < 0 < F(b)$ and $\eps>0$, find a real $c$ such that $|F(c)| < \eps$.
Here is a draft program for the algorithm:

\begin{eatab}
\> if\ \ $|F\big((a+b)/2\big)|\ge\eps$\>\>\>\>\>\>\> then\\[2pt]
\> \> if\>\>\ $F\big((a+b)/2\big) <0$\>\>\>\>\>\
   then $a := (a+b)/2$\\[2pt]
\> \> elseif\>\>\ $F\big((a+b)/2\big) >0$\>\>\>\>\>\
   then $b := (a+b)/2$
   \\[2pt]
\> elseif\>\>\ $c=\nil$ then $c := (a+b)/2$
\end{eatab}

The condition $c=\nil$ in the last line ensures that computation stops when $c$ is assigned a real number for the first time.

The Boolean expression $|F\big((a+b)/2\big)|\ge\eps$ is not quite a green light for the algorithm.
When it is violated for the first time, $c$ is still equal to \nil. But the equality $c=\nil$ is a green light.
With an eye on using Remark~\ref{rem:elegant}, we modify the draft program to the following program, our ``official'' program of an ASM $A$ representing the bisection algorithm.

\begin{eatab}
\> if\ \ $c=\nil$\>\>\>\> then\\[2pt]
\> \> if\>\>\ $F\big((a+b)/2\big) < -\eps$\>\>\>\>\>\>
   then $a := (a+b)/2$\\[2pt]
\> \> elseif\>\>\ $F\big((a+b)/2\big) >\eps$\>\>\>\>\>\>
   then $b := (a+b)/2$
   \\[2pt]
\>\> else\>\>\ $c := (a+b)/2$
\end{eatab}

\VA\ consists of the obligatory symbols, the symbols in \PA, and the unary relation symbol \Real.
In every initial state of $A$, \Real\ is (a copy of) the set of real numbers, the static functions of \PA\ have their standard meaning, and $c=\nil$.

Notice that Lemma~\ref{lem:simple} applies to \PA\ with \sn\ being $c := (a+b)/2$.
Taking this into account, the reversification procedure of \S\ref{sec:rev} gives us a reversible expansion $B$ of $A$ with the following program.

\begin{eatab}
\> if $c=\nil$ then\\
\>\> $\k:=\k+1 \prl$\\
\>\> if $F\big((a+b)/2\big) < -\eps$ then\\[2pt]
\>\>\> $a := (a+b)/2 \prl \F^1(\k+1):=\true
  \prl a_0(\k+1) := a$\\[2pt]
\>\> elseif $F\big((a+b)/2\big) >\eps$ then\\[2pt]
\>\>\> $b := (a+b)/2 \prl \F^2(\k+1):=\true
  \prl b_0(\k+1) := b$\\[2pt]
\>\> else $c := (a+b)/2$
\end{eatab}

This program can be simplified (and remain reversible).
Notice that
\begin{itemize}
\item if $\mathtt{Fire}^1(k+1) = \true$, then $\F^2(k+1) = \false$, the previous value of $b$ is the current value of $b$, and the previous value of $a$ is $2a-b$ where $a,b$ are the current values; and
\item if $\mathtt{Fire}^1(k+1) = \false$, then $\F^2(k+1) = \true$, the previous value of $a$ is the current value of $a$, and the previous value of $b$ is $2b-a$ where $a,b$ are the current values.
\end{itemize}
Thus, there is no need for functions $a_0, b_0$, recording the previous values of variables $a,b$, and there is no need for $\F^2$. We get:

\begin{eatab}
\> if $c=\nil$ then\\
\>\> $\k:=\k+1 \prl$\\
\>\> if $F\big((a+b)/2\big) < -\eps$ then\\[2pt]
\>\>\> $a := (a+b)/2 \prl \F^1(\k+1):=\true$\\[2pt]
\>\> elseif $F\big((a+b)/2\big) >\eps$ then $b := (a+b)/2$\\[2pt]
\>\> else $c := (a+b)/2$
\end{eatab}

The corresponding inverse algorithm may have this program:

\begin{eatab}
\ if $\k>0$ then\\[2pt]
\> $\k:=\k-1 \prl$ if $c\ne\nil$ then $c:=\nil$ \\[2pt]
\> $\parallel$\ if $\F^1(\k)=\true$ then
      $\big(\F^1(\k):=\false \parallel a:=2a-b\big)$ \\[2pt]
\>\phantom{$\parallel$} else $b:=2b-a$
\end{eatab}

\subsection{Linear-time sorting}
\label{sub:sort}\mbox{}

\smallskip
The information needed to reverse each step of the bisection algorithm is rather obvious; you don't have to use our reversification procedure for that.
Such information is slightly less obvious in the case of the following sorting algorithm.

For any natural number $n$, the algorithm sorts an arbitrary array $f$ of distinct natural numbers $<n$ in time $\le2n$.
Let $m$ be the length of an input array $f$, so that $m\le n$.
The algorithm uses an auxiliary array $g$ of length $n$ which is initially composed of zeroes.

Here is a simple illustration of the sorting procedure where $n=7$ and $f = \langle 3,6,0\rangle$.
Traverse array $f$ setting entries $g[f[i]]$ of $g$ to 1 for each index $i$ of $f$, i.e., setting $g[3], g[6]$ and $g[0]$ to 1, so that $g$ becomes $\langle 1,0,0,1,0,0,1\rangle$.
Each index $j$ of $g$ with $g[j]=1$ is an entry of the input array $f$.
Next, traverse array $g$ putting the indices $j$ with $g[j]=1$ --- in the order that they occur --- back into array $f$, so that $f$ becomes $\langle 0, 3, 6\rangle$. Voila, $f$ has been sorted in $m + n$ steps.

We describe an ASM $A$ representing the sorting algorithm. Arrays will be viewed as functions on finite initial segments of natural numbers. The nonobligatory function symbols in \VA\ are as follows.

\begin{enumerate}\setcounter{enumi}{-1}
\item Constants $m, n$ and variables $k,l$.
\item Unary dynamic symbols $f$ and $g$.
\item Binary static symbols\quad $<, +, -$\quad where $<$ is relational.
\end{enumerate}
In every initial state of $A$,
\begin{enumerate}\setcounter{enumi}{-1}
\item $m$ and $n$ are natural numbers such that $m\le n$, and $k=l=0$,
\item $f$, $g$ are arrays of lengths $m$, $n$ respectively, the entries of $f$ are distinct natural numbers $<n$, and all entries  of $g$ are zero,
\item the arithmetical operations $+,-$ and relation $<$ work as expected on natural numbers.
\end{enumerate}
In the following program of $A$, $k$ is the step counter, and $l$ indicates the current position in array $f$ to be filled in.

\begin{eatab}
\> if $k < m+n$ then\\[2pt]
\>\> $k := k+1 \prl$ \\[2pt]
\>\> if $k<m$ then $g(f(k)) := 1$ \\[2pt]
\>\> elseif $g(k-m)=1$ then $\big(f(l):=k-m \parallel l:=l+1\big)$
\end{eatab}

The reversification procedure of \S\ref{sec:rev} plus some simplifications described below give us a faithful reversible expansion $B$ of $A$ with a program

\begin{eatab}
\> if $k < m+n$ then\\[2pt]
\>\> $k := k+1 \prl$ \\[2pt]
\>\> if $k< m$
    then $\big(g(f(k)) := 1 \parallel g_1(k+1):=f(k)\big)$ \\[2pt]
\>\> elseif $g(k-m)=1$ then\\
\>\>\> $f(l):=k-m \prl l:=l+1 \prl f_0(k+1):=f(l)$
\end{eatab}

\noindent
We made some simplifications of \PB\ by discarding obviously unnecessary ancillary functions.
\begin{itemize}
\item It is unnecessary to record the firings of assignment $\sigma_2 = \big(g(f(k)):=1\big)$ because, in the states of $B$, the condition $\F^2(k)=\true$ is expressed by the inequality $k\le m$.
\item The ancillary function $g_0$ recording the previous values of $g$ is unnecessary because those values are all zeroes.
\item The final two assignments in \PA\ fire simultaneously, so that one fire-recording function, say $\F^3$, suffices. But even that one ancillary function is unnecessary because, in the states of $B$, the condition $\F^3(k)=\true$ is expressed by $m<k \land g(k-m-1)=1$.
\item The ancillary functions $f_1$ and $l_0$ recording the previous value of $l$ are unnecessary because we know that value, it is $l-1$.
\end{itemize}

The desired inverse algorithm $C$ may be given by this program:

\begin{eatab}
\> if $k>0$ then  \\[2pt]
\>\> $k:=k-1$ \\
\>\> $\parallel$ if $k\le m$ then
      $\big(g(g_1(k)):=0 \parallel g_1(k):=\nil\big)$ \\
\>\> $\parallel$ if $m<k$ and $g(k-m-1)=1$ then \\
\>\>\> $f(l):=f_0(k) \prl l:=l-1 \prl f_0(k):=\nil$
\end{eatab}

Obviously, $A$ is not reversible as is; its final state doesn't have information for reconstructing the initial $f$.
But do we need both remaining  ancillary functions?
Since $f_0$ is obliterated after the first $n$ steps of $C$, $f_0$ seems unlikely on its own to ensure reversibility.
But it does. The reason is that, after the first $n$ steps of $C$, the original array $f$ is restored.
Recall that $g_1(k)$ records the value $f(k-1)$ of the original $f$ for each positive $k\le m$, but we can discard $g_1$ and modify \PC\ to

\begin{eatab}
\> if $k>0$ then $k:=k-1$ \\
\>\> $\parallel$ if $k\le m$ then $g(f(k-1)):=0$ \\
\>\> $\parallel$ if $m<k$ and  $g(k-m-1)=1$ then \\
\>\>\> $f(l):=f_0(k) \parallel l:=l-1 \parallel f_0(k):=\nil$
\end{eatab}

Alternatively, we can discard $f_0$ but keep $g_1$. Indeed, the purpose of the assignment $f(l):=f_0(k)$ in \PC\ is to restore $f(l)$ to its original value. But recall that every $f(l)$ is recorded as $g_1(l+1)$. So we can modify \PC\ to

\begin{eatab}
\> if $k>0$ then $k:=k-1$ \\
\>\> $\parallel$ if $k\le m$ then
      $\big(g(g_1(k)):=0 \parallel g_1(k):=\nil\big)$ \\
\>\> $\parallel$ if $m<k$ and $g(k-m-1)=1$ then\\
\>\>\> $f(l):=g_1(l+1) \prl l:=l-1$
\end{eatab}

\subsection{External functions and Karger's algorithm}
\label{sub:karger}\mbox{}

Until now, for simplicity, we restricted attention to algorithms that are isolated in the sense that their computations are not influenced by the environment.
Actually, the analysis of sequential algorithms generalizes naturally and easily to the case when the environment can influence the computation of an algorithm \cite[\S8]{G141}. To this end, so-called external functions are used.

Syntactically, the item~(V3) in \S\ref{sub:syn} should be refined to say that a function symbol $f$ may be dynamic, or static, or external.
Semantically, external functions are treated as oracles%
\footnote{In that sense, our generalization is similar to the oracle generalization of Turing machines.}
When an algorithm evaluates an external function $f$ at some input \bx, it is the environment (and typically the operating system) that supplies the value $f(\bx)$.
The value is well defined at any given step of the algorithm; if $f$ is called several times, during the same step, on the same input \bx, the same value is given each time. But, at a different step, a different value $f(\bx)$ may be given.

To illustrate reversification involving an external function, we turn attention to Karger's algorithm \cite{Karger}.
In graph theory, a minimum cut of a graph is a cut (splitting the vertices into two disjoint subsets) that minimizes the number of edges crossing the cut.
Using randomization, Karger's algorithm constructs a cut which is a minimum cut with a certain probability. That probability is small but only polynomially (in the number of vertices) small.
Here we are not interested in the minimum cut problem, only in the algorithm itself.

\begin{terminology}
Let $G = (V,E)$ be a graph and consider a partition $P$ of the vertex set $V$ into disjoint subsets which we call \emph{cells}; formally $P$ is the set of the cells.
The $P$-\emph{ends} of an edge $\{x,y\}$ are the cells containing the vertices $x$ and $y$.
An edge is \emph{inter-cell} (relative to $P$) if its $P$-ends are distinct. \qef
\end{terminology}

Now we describe a version of Karger's algorithm that we call KA.
Given a finite connected graph $(V,E)$, KA works with partitions of the vertex set $V$, one partition at a time, and KA keeps track of the set Inter of the inter-cell edges.
KA starts with the finest partition $P = \big\{ \{v\}: v\in V\big\}$ and $\Inter = E$.
If the current partition $P$ has $>2$ cells, then \Inter\ is nonempty because the graph $(V,E)$ is connected. In this case, KA selects a random inter-cell edge $e$, merges the $P$-ends $p,q$ of $e$ into one cell, and removes from \Inter\ the edges in $\big\{\set{x,y}: x\in p \land y\in q\big\}$.
The result is a coarser partition and smaller \Inter.
When the current partition has at most two cells, the algorithm stops.

Next we describe an ASM $A$ representing KA.
There are many ways to represent KA as an ASM.
Thinking of the convenience of description rather than implementation of KA, we chose to be close to naive set theory. %Without loss of generality, the vertex set $V$ is a finite nonempty initial segment of natural numbers.
Let $U$ be a set that includes $V$ and all subsets of $V$ and all sets of subsets of $V$ (which is much more than needed but never mind).
The relation $\in$ on $U$ has its standard meaning; the vertices are treated as atoms (or urelements), not sets.

The nonobligatory function symbols of \VA\ are as follows.

\begin{enumerate}\setcounter{enumi}{-1}
\item Nullary variables $P$ and \Inter.
\item Unary static symbols $V, E, |.|$, and a unary external symbol $R$.
\item Binary static symbols $>, -$, Merge, and \Intra, where $>$ is relational.
\end{enumerate}

In every initial state of $A$,
\begin{itemize}
\item $V$, $U$ and $\in$ are as described above (up to isomorphism). $|s|$ is the cardinality of a set $s$, and $-$ is the set-theoretic difference. The relation $>$ is the standard ordering of natural numbers
\item $E$ is a set of unordered pairs \set{x,y} with $x,y\in V$ such that the graph $(V,E)$ is connected. $P$ is the finest partition $\big\{ \set{v}: v\in V\big\}$ of $V$. $\Inter = E$.
\item If $e \in E$, $S$ is a partition of $V$, and $p,q$ are the $S$-ends of $e$, then
  \begin{itemize}
  \item $\Merge(e,S) = (S-\set{p,q}) \cup \set{p\cup q}$, and
  \item $\Intra(e,S) = \big\{ \set{x,y}: x\in p \land y\in q\big\}$.
  \end{itemize}
\end{itemize}
\noindent
The external function $R$ takes a nonempty set and returns a member of it.
The program of $A$ can be this:

\begin{eatab}
\> if $|P|>2$ then \\[2pt]
\>\> $P:= \Merge(R(\Inter),P)$\\
\>\> $\parallel\ \Inter := \Inter - \Intra(R(\Inter),P)$
\end{eatab}

Now we apply the reversification procedure of Theorem~\ref{thm:rev}, taking Remark~\ref{rem:elegant} into account.
We also take into account that both assignments fire at every step of the algorithm and so there is no need to record the firings.
This gives us a faithful reversible expansion $B$ of $A$ with a program

\begin{eatab}\small
\> if $|P|>2$ then \\
\>\> $\k:=\k+1 \prl$ \\
\>\> $P:= \Merge(R(\Inter),P) \prl P_0(\k+1) := P \prl$ \\
\>\> $\Inter := \Inter - \Intra(R(\Inter),P) \prl \Inter_0(\k+1):=\Inter$
\end{eatab}

\noindent
The corresponding inverse ASM $C$ may be given by the program

\begin{eatab}
\> if $\k>0$ then \\
\>\> $\k:=\k-1 \prl$\\
\>\> $P:=P_0(\k) \prl P_0(\k):=\nil \prl$ \\
\>\> $\Inter:=\Inter_0(\k) \prl \Inter_0(\k):=\nil$
\end{eatab}

\begin{remark}
A custom crafted faithful expansion may be more efficient in various ways.
For example, instead of recording the whole $P$, it may record just one of the two $P$-ends of $R(\Inter)$. This would require a richer vocabulary.
\end{remark}

\section{Conclusion}

We have shown how to reversify an arbitrary sequential algorithm $A$ by gently instrumenting $A$ with bookkeeping machinery.
The result is a step-for-step reversible algorithm $B$ whose behavior, as far as the vocabulary of $A$ is concerned, is identical to that of $A$.

We work with an ASM (abstract state machine) representation of the given algorithm which is behaviorally identical to it.
The theory of such representation is developed in \cite{G141}, and the practicality of it has been amply demonstrated.


\begin{thebibliography}{99}

\bibitem{Al-Rabadi}
Anas N. Al-Rabadi,
``Reversible logic synthesis: From fundamentals to quantum computing,''
Springer Berlin, 2014

\bibitem{Bennett}
Charles H. Bennett,
``Logical reversibility of computation,"
\emph{IBM Journal of Research and Development} 17:6 (1973), 525--532

\bibitem{Boerger}
Egon B\"orger and Robert St\"ark,
``Abstract state machines: A method for high-level system design and analysis," Springer Berlin, 2003

\bibitem{G157} Andreas Blass and Yuri Gurevich,
"Abstract state machines capture parallel algorithms,"
\emph{ACM Transactions on Computational Logic} 4:4 (2003), 578--651. Correction and extension, ibid. 9:3 (2008), Article 19

\bibitem{G182} Andreas Blass, Yuri Gurevich and Benjamin Rossman,
``Interactive small-step algorithms," Parts I and II; \emph{Logical Methods in Computer Science} 3:4 (2007), Articles~3 and 4

\bibitem{De Vos}
Alexis De Vos,
``Reversible computing: Fundamentals, quantum computing, and applications,''
Wiley-VCH Weinheim, 2010

\bibitem{G188} Nachum Dershowitz and Yuri Gurevich, ``A natural axiomatization of computability and proof of Church's thesis,'' \emph{Bulletin of Symbolic Logic} 14:3 (2008), 299--350

\bibitem{Dijkstra}
Edsger W. Dijkstra,
``Selected writings on computing: A personal perspective,''
Springer New York, 1982

\bibitem{FT}
Edward Fredkin and Tommaso Toffoli ,
``Concervative logic,''
International Journal of Theoretical Physics 21:3/4 (1982), 219--253

\bibitem{Gries}
David Gries,
``The science of programming,''
Springer New York, 1981

\bibitem{G092}
Yuri Gurevich,
``Evolving algebras: An introductory tutorial,''
The Bulletin of the EATCS 43 (1991), 264--284, and
(slightly revised) in ``Current Trends in Theoretical Computer Science: Essays and Tutorials'' (eds. G. Rozenberg and A. Salomaa),
World Scientific, 1993, 266--292. (Abstract state machines used to be called evolving algebras.)

\bibitem{G103}  ``Evolving Algebra 1993: Lipari Guide,''
in Specification and Validation Methods (ed. E. B\"orger),
Oxford University Press 1995, 9--36. Reprinted at arXiv, \url{https://arxiv.org/abs/1808.06255}. (Abstract state machines used to be called evolving algebras.)

\bibitem{G141} Yuri Gurevich,
``Sequential abstract state machines capture sequential algorithms,''
ACM Transactions on Computational Logic 1:1 (2000), 77--111

%\bibitem{G243} Yuri Gurevich,
%``Means-fit effectivity,''
%EATCS Bulletin 130 February 2020,
%arXiv:2002.03145

%\bibitem{G209} Yuri Gurevich,
%``What is an Algorithm?''
%\emph{Springer Lecture Notes in Computer Science} 7147, 31--42, 2012.
%A slightly revised version in
%``Church's Thesis: Logic, Mind and Nature"
%(eds. A. Olszewski et al.)
%Copernicus Center Press 2014

%\bibitem{G129} Yuri Gurevich,
%``May 1997 draft of the ASM guide,''
%Tech Report CSE-TR-336-97, EECS Dept, University of Michigan 1997,
%\url{https://www.microsoft.com/en-us/research/%
%wp-content/uploads/2017/01/129.pdf}

\bibitem{G098}
Yuri Gurevich and Jim Huggins,
``The semantics of the C programming language,''
in Proc. CSL'92, Computer Science Logic (eds. E. B\"orger et al.),
Springer Lecture Notes in Computer Science 702 (1993), 274--308

\bibitem{Karger}
David Karger,
``Global min-cuts in RNC and other ramifications of a simple mincut algorithm,''
Proc.\ 4th Annual ACM-SIAM Symposium on Discrete Algorithms, 1993

\bibitem{Kolmogorov}
Andrei N. Kolmogorov, “On the concept of algorithm,” Uspekhi Matematicheskikh Nauk 8:4 (1953), 175–176 (Russian). English version in Vladimir Uspensky and Alexei Semenov, “Algorithms: Main ideas and applications,” Kluwer 1993, 18–-19

%\bibitem{Landauer}
%Rolf William Landauer,
%``Irreversibility and Heat Generation in the Computing Process,"
%\emph{IBM Journal of Research and Development} 5:3 183--191 1961

\bibitem{Morita}
Kenichi Morita,
``Theory of reversible computing,''
Springer Japan, 2017

\bibitem{Perumala}
Kalyan S. Perumala,
``Introduction to reversible computing,''
CRC Press Boca Raton, 2014

\bibitem{RC}
RC2021,
13th International Conference on Reversible Computation,\\
\url{https://reversible-computation-2021.github.io/}


%\bibitem{Toffoli}
%Tommaso Toffoli,
%``Reversible computing,''
%MIT Laboratory for Computer Science, Cambridge, MA,
%Tech Memo MIT/LCS/TM-151 1980

\end{thebibliography}
\end{document}